\documentclass[12pt]{article}
\usepackage[utf8]{inputenc}
\usepackage[dvipsnames]{xcolor}
\usepackage{amsmath,amsthm,amssymb,mathrsfs,url,graphicx,hyperref}

\title{GAP Measures and Wave Function Collapse}

\author{
Roderich Tumulka\footnote{Fachbereich Mathematik, Eberhard Karls University T\"ubingen, Auf der Morgenstelle 10, 72076 T\"ubingen, Germany. E-mail: roderich.tumulka@uni-tuebingen.de}
}

\date{February 23, 2026}

\theoremstyle{plain}
\newtheorem{thm}{Theorem}

\theoremstyle{definition}
\newtheorem{rem}{Remark}

\newcommand{\Hilbert}{\mathscr{H}}

\newcommand{\be}{\begin{equation}}
\newcommand{\ee}{\end{equation}}

\DeclareMathOperator{\tr}{tr}
\newcommand{\EEE}{\mathbb{E}}
\newcommand{\PPP}{\mathbb{P}}
\newcommand{\RRR}{\mathbb{R}}
\newcommand{\SSS}{\mathbb{S}}

\newcommand{\vx}{\boldsymbol{x}}

\newcommand{\vq}{{\boldsymbol{q}}}

\newcommand{\GAP}{\mathrm{GAP}}
\newcommand{\GA}{\mathrm{GA}}
\newcommand{\G}{\mathrm{G}}
\newcommand{\sF}{\mathscr{F}}
\newcommand{\sX}{\mathscr{X}}

\begin{document}
\maketitle
\begin{abstract}
GAP measures (also known as Scrooge measures) are a natural class of probability distributions on the unit sphere of a Hilbert space that come up in quantum statistical mechanics; for each density matrix $\rho$ there is a unique measure $\GAP_\rho$. We describe and prove a property of these measures that was not recognized so far: If a wave function $\Psi$ is $\GAP_\rho$ distributed and a collapse occurs, then the collapsed wave function $\Psi'$ is again GAP distributed (relative to the appropriate $\rho'$). This fact applies to collapses due to a quantum measurement carried out by an observer, as well as to spontaneous collapse theories such as CSL or GRW. More precisely, it is the conditional distribution of $\Psi'$, given the measurement outcome (respectively, the noise in CSL or the collapse history in GRW), that is $\GAP_{\rho'}$. 

\medskip

\noindent{\bf Key words:} ensembles of wave functions; Scrooge measure; Ghirardi-Rimini-Weber (GRW) theory; continuous spontaneous localization (CSL) theory.
\end{abstract}

\section{Introduction}
\label{sec:intro}

We report a connection between quantum statistical mechanics and the foundations of quantum mechanics that was surprising to us. It is a connection between GAP measures, certain probability distributions that play a role in quantum statistical mechanics as the thermal equilibrium distribution of wave functions \cite{GLTZ06}, and wave function collapse, either in the form of a collapse postulate in the quantum formalism or in collapse theories \cite{BG03,GB25}, \cite[Sec.~3.3]{Tum22} (i.e., theories that provide foundations for the quantum formalism by postulating equations for how the collapse occurs).

For every density matrix $\rho$ on a Hilbert space $\Hilbert$, there is a unique measure $\GAP_\rho$ on the unit sphere
\be
\SSS(\Hilbert) := \{ \psi \in \Hilbert: \|\psi\|=1\}\,.
\ee
In a precise sense, it is the most spread-out probability distribution with density matrix $\rho$ \cite{JRW94}. These measures are also known under the name ``Scrooge measure'' \cite{JRW94,Mar24,MS25,Pre26}, and their properties and applications have been studied extensively \cite{TZ05,Rei08,GLMTZ15,Tum20,TTV25,Mar24,Vog25,MS25,Pre26,ITV26}. For their definition, see Section~\ref{sec:main}. For $\rho$ proportional to a projection $P$ (i.e., for $\rho=P/\tr P$), $\GAP_\rho$ is in fact the uniform distribution over $\SSS(\operatorname{range} P)$.

Our result applies to various kinds of collapses:
\begin{itemize}
\item First, to the collapse associated with the ideal quantum measurement of a self-adjoint observable with discrete spectrum,
\be\label{Adef}
A=\sum_{\alpha\in\mathrm{spectrum}(A)} \alpha P_\alpha\,,
\ee
where $P_\alpha$ is the projection to the eigenspace with eigenvalue $\alpha$; the collapse then replaces the wave function $\Psi$ of a system by the collapsed wave function
\be\label{Acollapse}
\Psi' = \frac{P_\alpha \Psi}{\|P_\alpha \Psi\|}
\ee
if $\alpha$ was the observed value.

\item Second, to more general collapses of the form
\be\label{generalmeasure}
\Psi' = \frac{L_z \Psi}{\|L_z \Psi\|} \,
\ee
where $L_z$ can be any operator (not necessarily a projection) describing how to collapse $\Psi$ upon observing the value $z$, subject to the condition
\be
\sum_z L_z^\dagger L_z = I
\ee
needed to ensure that the associated formula for the probability distribution of the outcome $z$,
\be\label{generalmeasureBorn}
\PPP(z) = \| L_z \Psi \|^2
\ee
does define a probability distribution.

\item Third, to collapse theories \cite{BG03,GB25} such as GRW \cite{GRW86,Bell87} \cite[Sec.~3.3]{Tum22} and CSL \cite{Pea89,GPR90,BG03,GB25}. In these cases we condition, not on the observed outcome, but on the times and locations of the GRW collapses, respectively on the history of the noise in CSL.
\end{itemize}

The remainder of this paper is organized as follows. In Section~\ref{sec:main}, we recall the definition of the GAP measure and formulate and prove our result as a mathematical theorem. In Section~\ref{sec:discussion}, we describe some examples of how collapse theories are covered by our theorem and provide further discussion.

\section{Main Result}
\label{sec:main}

GAP measures can be defined as follows. Let $\rho$ be a density matrix (i.e., a positive operator with trace 1) and $\G_\rho$ the Gaussian measure on $\Hilbert$ with mean 0 and covariance operator $\rho$ \cite{GLTZ06,Tum20}. Define $\GA_\rho$ as the measure on $\Hilbert$ with density function $\|\cdot\|^2$ relative to $\G_\rho$,
\be
\GA_\rho(d\psi) = \|\psi\|^2 \: \G_\rho(d\psi)\,.
\ee
Using $\tr\rho=1$, one finds that $\GA_\rho$ is a probability measure \cite{GLTZ06}. Let $\Phi$ be a random vector with distribution $\GA_\rho$; then $\GAP_\rho$ is defined as the distribution of $\Psi:= \Phi/\|\Phi\|$. Clearly, $\GAP_\rho$ is concentrated on the unit sphere $\SSS(\Hilbert)$. In fact, the name ``GAP'' stands for ``Gaussian adjusted projected,'' where the adjustment is the multiplication by $\|\cdot\|^2$ and the projection maps to the unit sphere. For further discussion of GAP measures and how they arise in thermal equilibrium, see \cite{GLTZ06,GLMTZ15,TTV25}.

We can now formulate our main result as a theorem. We use the ``$\sim$'' notation as in ``$X\sim \mu$'' for expressing that the random variable $X$ has probability distribution $\mu$.

\begin{thm}
Let $\Hilbert$ be a Hilbert space and $\Psi$ a random point on $\SSS(\Hilbert)$ (the ``initial wave function''). Suppose for each point $x$ in the measure space $(\sX,\sF,\mu)$, we are given an operator $L(x)$ (``collapse operator'') 
such that
\be\label{normalization}
\int_{\sX} \mu(dx) \, L^\dagger(x) \, L(x) = I \,.
\ee
Suppose the random point $X$ gets chosen with probability distribution
\be\label{Born}
\|L(x) \Psi\|^2 \, \mu(dx)
\ee
(``Born distribution''), and define
\be\label{Psi'def}
\Psi'=\frac{L(X) \Psi}{\|L(X) \Psi \|}
\ee
(the ``final wave function''). If $\Psi \sim \GAP_\rho$, then the conditional distribution of $\Psi'$, given $X$, is $\GAP_{\rho'(X)}$ with 
\be\label{rho'(x)}
\rho'(x)=\frac{L(x)\rho L^\dagger(x)}{\tr[L(x)\rho L^\dagger(x)]} \,.
\ee
\end{thm}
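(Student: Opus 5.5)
Our approach is to lift everything from the sphere to the Gaussian level, where linear maps act very simply, and then project back. Let $\Phi\sim\GA_\rho$ and $\Psi=\Phi/\|\Phi\|$, so that $\Psi\sim\GAP_\rho$. Since $\|L(x)\Psi\|^2=\|L(x)\Phi\|^2/\|\Phi\|^2$ and $L(X)\Psi/\|L(X)\Psi\|=L(X)\Phi/\|L(X)\Phi\|$, we may realize the joint law of $(\Psi,X,\Psi')$ through $(\Phi,X)$. The joint distribution of $(\Phi,X)$ is
\be
\|\phi\|^2\,\G_\rho(d\phi)\,\frac{\|L(x)\phi\|^2}{\|\phi\|^2}\,\mu(dx)=\|L(x)\phi\|^2\,\G_\rho(d\phi)\,\mu(dx)\,.
\ee
The cancellation of the factor $\|\phi\|^2$ is the key observation. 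Integrating over $\phi$ and using $\int\|L\phi\|^2\G_\rho(d\phi)=\tr[L\rho L^\dagger]$ (the covariance of $\G_\rho$ is $\rho$), the marginal of $X$ is $\tr[L(x)\rho L^\dagger(x)]\,\mu(dx)$. This is a probability measure by \eqref{normalization} and $\tr\rho=1$.

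Next we condition on $X=x$. For $x$ with $c(x):=\tr[L(x)\rho L^\dagger(x)]>0$, which holds for almost every $x$ under the marginal, the conditional law of $\Phi$ is
\be
\frac{\|L(x)\phi\|^2}{c(x)}\,\G_\rho(d\phi)\,.
\ee
Let $\Phi''=L(x)\Phi$. Its law is the image of this measure under $\phi\mapsto L(x)\phi$. The density $\|L(x)\phi\|^2$ is a function of $L(x)\phi$ only, so the image is $\|\chi\|^2 c(x)^{-1}\,(L(x)_*\G_\rho)(d\chi)$. The pushforward of a centered Gaussian under a linear (bounded) map is the centered Gaussian with covariance $L(x)\rho L^\dagger(x)=c(x)\rho'(x)$. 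We therefore need a scaling step: if $\chi\sim\G_{c\sigma}$ then $\chi/\sqrt c\sim\G_\sigma$. Writing $\chi=\sqrt{c}\,\eta$, the density $\|\chi\|^2/c=\|\eta\|^2$, so $\Phi''/\sqrt{c(x)}\sim\GA_{\rho'(x)}$. Finally, $\Psi'=\Phi''/\|\Phi''\|$ is invariant under positive scaling, so the conditional law of $\Psi'$ given $X=x$ is $\GAP_{\rho'(x)}$.

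The main obstacle is measure-theoretic rigor rather than the computation. First, the Gaussian measure on an infinite-dimensional $\Hilbert$ must be handled properly. Covariance $\rho$ is trace class, so $\G_\rho$ is a genuine measure, and pushforwards under bounded $L(x)$ are fine. If the $L(x)$ may be unbounded, we would need to argue on the domain, using that $\int\|L(x)\phi\|^2\,\G_\rho(d\phi)\,\mu(dx)=1<\infty$. Second, one must check the identity $L_*\G_\rho=\G_{L\rho L^\dagger}$ via characteristic functionals: $\EEE\, e^{i\,\mathrm{Re}\langle \eta,L\phi\rangle}=\EEE\, e^{i\,\mathrm{Re}\langle L^\dagger\eta,\phi\rangle}$. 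Third, the conditional distribution must be made rigorous via a regular version. We would do this by verifying the defining identity $\EEE[f(X)g(\Psi')]=\int \PPP_X(dx)\,f(x)\,\GAP_{\rho'(x)}(g)$ for bounded measurable $f,g$, using Fubini on the joint density above. This also requires measurability of $x\mapsto L(x)$, which we would assume as part of the setup. The null set where $L(X)\Psi=0$, which has probability zero by the Born rule, is discarded.
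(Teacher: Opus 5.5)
Your proposal is correct and takes essentially the same route as the paper. Both lift $\Psi$ to $\Phi\sim\GA_\rho$ and note the cancellation giving the joint law $\|L(x)\phi\|^2\,\G_\rho(d\phi)\,\mu(dx)$; both push $\G_\rho$ forward under $L(x)$ to covariance $L(x)\rho L^\dagger(x)$, rescale by $\sqrt{\tr[L(x)\rho L^\dagger(x)]}$ to reach $\GA_{\rho'(x)}$, and normalize. The only differences are minor: you compute the marginal of $X$ directly from the Gaussian second moment rather than via $\int\GAP_\rho(d\psi)\,|\psi\rangle\langle\psi|=\rho$, and you add measure-theoretic caveats (characteristic functionals, regular conditional laws, null sets) that the paper leaves implicit.
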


\bigskip

\begin{proof}
As a preparation, we collect some formulas for later use: First, it is well known \cite{GLTZ06,GLMTZ15} that
\be\label{rhoGAPrho}
\int_{\SSS(\Hilbert)} \GAP_\rho(d\psi) \: |\psi\rangle \langle\psi| =\rho\,.
\ee
Next, by construction,
\begin{align}
\PPP(X\in dx|\Psi) &= \|L(x) \Psi\|^2 \, \mu(dx)\\
\PPP(X\in dx, \Psi \in d\psi) &= \|L(x) \psi\|^2 \, \mu(dx) \, \GAP_\rho(d\psi)\\
\PPP(X\in dx) &= \mu(dx) \int_{\SSS(\Hilbert)} \GAP_\rho(d\psi) \: \|L(x) \psi\|^2\\
&= \mu(dx) \int_{\SSS(\Hilbert)} \GAP_\rho(d\psi) \: \tr\Bigl(|\psi\rangle \langle\psi| L^\dagger(x) L(x) \Bigr)\\
&= \mu(dx) \tr\Biggl[ \Bigl(\int_{\SSS(\Hilbert)} \GAP_\rho(d\psi) |\psi\rangle \langle\psi| \Bigr) L^\dagger(x) L(x) \Biggr]\\
&\stackrel{\eqref{rhoGAPrho}}{=} \mu(dx) \tr\bigl[ \rho L^\dagger(x) L(x) \bigr]\,.
\end{align}
The quantity we are ultimately interested in is the conditional distribution of $\Psi'$, given $X$; in formulas, this distribution is, for any $B\subset \SSS(\Hilbert)$,
\begin{align}
\PPP(\Psi' \in B |X\in dx) 
&= \PPP(\Psi' \in B , X\in dx)/\PPP(X \in dx)\\
&= \PPP\biggl( \frac{L(X) \Psi}{\|L(X) \Psi \|} \in B ,X \in dx\biggr)/\PPP(X\in dx)\,.
\end{align}
Write $\Psi=\Phi/\|\Phi\|$ with $\Phi \sim \GA_\rho$. Then
\begin{align}
&\PPP\biggl(\frac{L(x) \Psi}{\|L(x) \Psi \|} \in B ,X \in dx\biggr)\nonumber\\
&\qquad= \PPP\biggl(\frac{L(x) \Phi}{\|L(x) \Phi \|} \in B ,X \in dx\biggr)\\
&\qquad= \int_{\phi\in\Hilbert} \PPP(\Phi \in d\phi) \: \PPP\biggl(\frac{L(x) \Phi}{\|L(x) \Phi \|} \in B ,X \in dx \bigg| \Phi=\phi \biggr)\\
&\qquad= \int_{\phi\in\Hilbert} \GA_\rho(d\phi) \: 1_B\biggl(\frac{L(x) \phi}{\|L(x) \phi \|}\biggr) \: \PPP(X \in dx|\Phi=\phi)\\
&\qquad= \int_{\phi\in\Hilbert} \GA_\rho(d\phi) \: 1_B\biggl(\frac{L(x) \phi}{\|L(x) \phi \|} \biggr) \: \frac{\|L(x) \phi\|^2}{\|\phi\|^2} \: \mu(dx)\\
\intertext{[using $\GA_\rho(d\phi) = \|\phi\|^2 \: \G_\rho(d\phi)$]}
&\qquad= \mu(dx) \int_{\phi\in\Hilbert} \G_\rho(d\phi) \: 1_B\biggl(\frac{L(x) \phi}{\|L(x) \phi \|} \biggr) \: \|L(x) \phi\|^2\,.\label{pf5last}
\end{align}
Now substitute $\xi = L(x) \phi$. Note that if $\phi$ is Gaussian with mean 0, then for any operator $L$, $L \phi$ is also Gaussian with mean 0, and its covariance matrix is $\EEE [L|\phi\rangle \langle \phi| L^\dagger]= L\EEE [|\phi\rangle \langle \phi|] L^\dagger=L\rho L^\dagger$ with $\rho$ the covariance of $\phi$. Thus, $\xi \sim \G_{L(x)\rho L^\dagger(x)}$, and
\be\label{pf6}
\eqref{pf5last}= \mu(dx) \int_{\xi\in\Hilbert} \G_{L(x) \rho L^\dagger(x)}(d\xi) \: 1_B\biggl(\frac{\xi}{\|\xi \|} \biggr)  \:\|\xi\|^2 \,.
\ee
Now substitute $\chi = \xi/\sqrt{\tr[ \rho L^\dagger(x) L(x) ]}$, that is, we just rescale $\xi$ by a fixed constant factor; then $\chi$ is also Gaussian, with mean 0 and covariance matrix
\be
L(x) \rho L^\dagger(x)/\tr[ \rho L^\dagger(x) L(x) ]=\rho' \,.
\ee
Thus,
\begin{align}
\eqref{pf6}
&= \mu(dx) \int_{\chi\in\Hilbert} \G_{\rho'}(d\chi) \: 1_B(\chi/\|\chi \|) \: \|\chi\|^2 \: \tr[ \rho L^\dagger(x) L(x) ]\\
\intertext{[using the definition of GA]}
&= \mu(dx) \tr[ \rho L^\dagger(x) L(x) ] \int_{\chi\in\Hilbert} \GA_{\rho'}(d\chi) \:  1_B(\chi/\|\chi \|)\\
\intertext{[substituting $\varphi=\chi/\|\chi\|$ and using the definition of GAP]}
&= \mu(dx) \tr[ \rho L^\dagger(x) L(x) ] \int_{\varphi\in\SSS(\Hilbert)} \GAP_{\rho'(x)}(d\varphi) \: 1_B(\varphi)\\
&= \mu(dx) \tr[ \rho L^\dagger(x) L(x) ] \: \GAP_{\rho'(x)}(B)\,.
\end{align}
Thus,
\begin{align}
\PPP(\Psi' \in B |X\in dx)
&= \frac{ \mu(dx) \: \tr[ \rho L^\dagger(x) L(x) ] \: \GAP_{\rho'(x)}(B ) }{ \mu(dx) \: \tr[ \rho L^\dagger(x) L(x) ] }\\
&= \GAP_{\rho'(x)}(B )\,,
\end{align}
which is what the theorem claimed.
\end{proof}

\section{Examples}
\label{sec:discussion}

We describe how the theorem covers the examples mentioned before. 
\begin{itemize}
\item First, for an {\bf ideal quantum measurement} of \eqref{Adef}, $\sX= \mathrm{spectrum}(A)$ discrete, the $\sigma$-algebra $\sF$ contains all subsets, and $\mu$ is the counting measure, so that $\int_\sX \mu(dx)$ means the same as $\sum_\alpha$. The operator $L(x)$ is given by $P_\alpha$, and \eqref{normalization} is satisfied because
\be
\sum_\alpha P_\alpha = I\,.
\ee
The random value $X$ is the measurement outcome, the distribution \eqref{Born} is indeed the Born distribution giving weight $\langle \Psi |P_\alpha|\Psi\rangle = \|P_\alpha \Psi\|^2$ to each eigenvalue $\alpha$ of $A$, and the collapse formula \eqref{Psi'def} reduces to \eqref{Acollapse}.

\item For the {\bf more general measurements} of the form \eqref{generalmeasure}, $\sX$ is still the discrete set of possible outcomes, $X$ the actual outcome, $L(x)$ means the same as $L_z$, and \eqref{Born} still reduces to the Born distribution \eqref{generalmeasureBorn}, which corresponds to a POVM with the positive operator $L^\dagger_z L_z$ associated to the possible outcome $z$.

\item For {\bf GRW theory}, 
we consider as $\Psi'$ the wave function $\Psi_\tau\in L^2(\RRR^{3N})$ obtained through the stochastic GRW evolution at time $t=\tau>0$ and as $\Psi$ the wave function at $t=0$; $x$ is a particular history of collapses, given by specifying the number $n$ of collapses that occur during the time interval $[0,\tau]$, the times $0\leq t_1<\ldots<t_n\leq \tau$ at which the collapses occur, the locations $\vx_1,\ldots,\vx_n \in \RRR^3$ at which the collapses are centered, and the labels $i_1,\ldots,i_n\in\{1,\ldots,N\}$ of the particles subject to each collapse. Correspondingly,
\be
\sX=\bigcup_{n=0}^\infty \sX_n
\ee
with
\be
\sX_n = \Bigl\{(t_1,\vx_1,i_1,\ldots,t_n,\vx_n,i_n) \in \bigl([0,\tau]\times \RRR^3 \times \{1...N\}\bigr)^n: t_1<\ldots <t_n  \Bigr\}
\ee
and measure
\begin{align}
\mu(B)&=\sum_{n=0}^\infty \mu_n(B\cap \sX_n)\,,\\
\mu_n(B_n)&=  \int_0^\tau dt_1\int_{t_1}^\tau dt_2\cdots \int_{t_{n-1}}^\tau \!\!\! dt_n\: (N\lambda)^n \, e^{-N\lambda t_n} \bigl( 1- e^{-N\lambda(\tau-t_n)} \bigr) \:\times \nonumber \\
&\qquad \times \: \int_{\RRR^3}d^3\vx_1\cdots \int_{\RRR^3} d^3\vx_n \sum_{i_1...i_n=1}^N \frac{1}{N^n} 1_{B_n}(t_1,\vx_1,i_1,\ldots,t_n,\vx_n,i_n)\label{mundef}
\end{align}
with $\lambda$ the collapse rate per particle (one of the constants of the GRW theory). The collapse operator for $x=(t_1,\vx_1,i_1,\ldots,t_n,\vx_n,i_n)$ is
\be\label{GRWLxdef}
L(x) = U_{t_n}^\tau C_{i_n}(\vx_n) U^{t_n}_{t_{n-1}} \cdots C_{i_2}(\vx_2)U_{t_1}^{t_2}C_{i_1}(\vx_1)U_0^{t_1}
\ee
with
\be
U_t^{t'}= e^{-iH(t'-t)/\hbar}
\ee
the unitary part of the time evolution from $t$ to $t'$ and $C_i(\vx)$ the multiplication operator by (the square root of) a Gaussian in the $i$-th variable centered at $\vx$,
\be
C_i(\vx) \, \Psi(\vq_1,\ldots,\vq_N) = (2\pi\sigma^2)^{-3/4} \exp\Bigl(-\frac{(\vq_i-\vx)^2}{4\sigma^2} \Bigr) \: \Psi(\vq_1,\ldots,\vq_N)\,,
\ee 
where $\sigma$ is the collapse width (another constant of GRW theory).\footnote{Alternatively, we would obtain an equivalent theory if we remove the factor $(N\lambda)^n \, e^{-N\lambda t_n} \bigl( 1- e^{-N\lambda(\tau-t_n)} \bigr)$ from \eqref{mundef} and include its square root in the definition \eqref{GRWLxdef} of $L(x)$.}

Then \eqref{Born} is exactly the joint distribution of all collapse events (e.g., \cite[(5.27)]{Tum22}) according to GRW theory, and $\Psi'$ as in \eqref{Psi'def} is indeed equal to $\Psi_\tau$.

\item For {\bf CSL theory}, $\Psi$ is again the wave function at $t=0$ and $\Psi'$ the one at $t=\tau$; $x$ is the ``noise field'' $\xi(\vx,t)$ for all $\vx\in\RRR^3, t\in[0,\tau]$, and $\mu$ the distribution of white noise in $\RRR^3\times [0,\tau]$ with mean 0 and correlation
\be
\EEE \bigl[ \xi(\vx,t) \, \xi(\vx',t') \bigr] = \gamma\, \delta^3(\vx-\vx') \, \delta(t-t') \,,
\ee
$\gamma>0$ a constant; $L(x) \psi$ is the solution at time $\tau$ of the (Stratonovich-type) equation \cite[(8.6)]{BG03}
\be\label{Stratonovich}
\frac{d}{dt}\psi(t) = \Biggl[ -\frac{i}{\hbar}H + \int_{\RRR^3}d^3\vx \, N(\vx) \, \xi(\vx,t) - \gamma \int_{\RRR^3} d^3\vx \, N^2(\vx) \Biggr] \psi(t)
\ee
starting from $\psi(0)=\psi$ with $N(\vx)$ the smeared-out particle number density operator,
\be
N(\vx) \, \psi(\vq_1,\ldots,\vq_N) = \sum_{i=1}^N (2\pi\sigma^2)^{-3/2}\exp\Bigl(-\frac{(\vq_i-\vx)^2}{2\sigma^2} \Bigr) \, \psi(\vq_1,\ldots,\vq_N)\,.
\ee
In other words, $L(x)$ is the time-ordered exponential of the square bracket in \eqref{Stratonovich}; $X$ is the actual realization of the noise field; its distribution \eqref{Born} is traditionally called the ``cooked'' probability, while $\mu$ is called the ``raw'' probability \cite[(7.40)]{BG03}; the known fact that the ``cooked'' distribution is always normalized proves \eqref{normalization}; $\Psi'$ as given by \eqref{Psi'def} then agrees with what is called the ``physical state vector'' \cite[(7.42-3)]{BG03}.
\end{itemize}

\begin{rem}
If we do not want to condition on $x$ then, in order to obtain the (unconditional) distribution of $\Psi'$, we need to average over $x$:
\begin{align}
\PPP(\Psi'\in B) 
&= \int_{x\in\sX} \PPP(X\in dx) \: \PPP(\Psi'\in B|X=x)\\
&=\int_{x\in\sX} \mu(dx) \: \tr[\rho L(x) L^\dagger(x)] \: \GAP_{\rho'(x)}(B)\,.\label{uncond}
\end{align}
In particular, the distribution of $\Psi_\tau$ in GRW theory is a mixture of GAP measures.
\end{rem}

\begin{rem}
Since for any probability distribution $\pi$ on $\SSS(\Hilbert)$, the associated density matrix is
\be
\rho_\pi = \int_{\psi \in\SSS(\Hilbert)}\pi(d\psi) \: |\psi\rangle\langle\psi|\,,
\ee
and since this formula is linear in $\pi$, it follows that for any mixture of distributions over the unit sphere, the density matrix is the corresponding mixture of the density matrices of the contributing distributions. Applying this to \eqref{uncond}, we obtain that
\begin{align}
\rho_{\PPP(\Psi' \in \,\cdot)} 
&= \int_{x\in\sX} \mu(dx) \: \tr[\rho L(x) L^\dagger(x)] \: \rho'(x)\\
&\stackrel{\eqref{rho'(x)}}{=} \int_{x\in\sX} \mu(dx) \: L(x)\rho L^\dagger(x)\,,
\end{align}
which is the post-collapse density matrix. In particular for GRW theory, this is the density matrix of $\Psi_\tau$, i.e., the solution of the GRW master equation \cite[(6.8)]{BG03}.
\end{rem}

%

\end{document}